\newtheorem{theorem}{Theorem}
\newtheorem{proof}{Proof}
\begin{document}
%
\title{Energy Efficiency in Multicast Multihop D2D Networks}
\IEEEoverridecommandlockouts

\author{\IEEEauthorblockN{Zicheng Xia,
Jiawei Yan, and Yuan Liu}
 \vspace*{0.5em}
\IEEEauthorblockA{School of Electronics and Information Engineering,
South China University of Technology, Guangzhou 510641, China}
Emails: zicheng-xia@foxmail.com, jiawei.yan@yahoo.com, eeyliu@scut.edu.cn
%

}



%


\maketitle


\begin{abstract}
As the demand of mobile  devices (MDs) for data services is explosively increasing, traditional offloading in the cellular networks is facing the contradiction of energy efficiency and quality of service. Device-to-device (D2D) communication is considered as an effective solution. This work investigates a scenario where the MDs have the same demand for common content and they cooperate to deliver it using multicast multihop relaying. We focus on the problem of  total power minimization by grouping the MDs in multihop D2D networks, while maintaining the minimum rate requirement of each MD. As the problem is shown to be NP-complete and the optimal solution can not be found efficiently, two greedy algorithms are proposed to solve this problem in polynomial time. Simulation results demonstrate that lots of power can be saved in the content delivery situation using multihop D2D communication, and the proposed algorithms are suitable for different situations with different advantages.
\end{abstract}


%
\IEEEpeerreviewmaketitle

\section{Introduction}
Recent communication systems are confronted with explosive growth in mobile applications. It is reported that people more frequently use their mobile devices (MDs) to connect, follow social media, watch live shows, etc. Therefore, the  increase of data demands requires efficient cellular technologies to remain users' quality of service. The traditional cellular network technique is not capable of meeting future's service needs.

	Offloading cellular networks is highly attracting attention in recent years by either migrating to new network topologies or developing enhancement techniques of current cellular networks to accommodate more subscribers with higher data rates\nocite{Aijaz2013A,sciancalepore2016offloading,kang2014mobile,Han2012Mobile,Y2016Optimal,pyattaev2014network,Al2014Optimal,sheng2016energy}	\cite{Aijaz2013A}-\cite{sheng2016energy}. A number of cellular offloading techniques have been proposed in the literature, such as switching to femtocells \cite{yun2015distributed}, \cite{Calin2010On} or Wi-Fi networks \cite{kang2014mobile}, \cite{rossi2015cooperative}. These techniques have a range of advantages including low cost, standardized interface and high quality of services. Another technique would be content multicast to MDs on the cellular network \cite{Y2016Optimal}, \cite{cao2015social}. This technique performs well in high density of MDs  requiring the same content. Nevertheless, a major problem of multicast is the limitation that all MDs download content by an identical rate decided by the worst channel among all MDs, which may sacrifice the performance of MDs in better channel conditions. Another attractive technique is to deliver content cooperatively in which MDs act as relays and connect other MDs with no congestion \cite{Aijaz2013A}, \cite{pyattaev2014network},\nocite{liu2014interference,Whitbeck2012Push,Wang2012Content,Ristanovic2011Energy,Al2011Offloading,Li2011Multiple}
 \cite{liu2014interference}-\cite{Li2011Multiple}.

	The concept of device-to-device (D2D) communications is to establish direct links between MDs and bypass the base station (BS). Cooperative D2D communication shows a good potential for content delivery (e.g., files, videos, live shows, etc.) \cite{kang2014mobile},\nocite{liu2014interference,Whitbeck2012Push,Wang2012Content,Al2011Offloading}
 \cite{liu2014interference}-\cite{Al2011Offloading}. The D2D-enabled cellular offloading can be realized by the following ways. One is that BS divides the content into different chunks and distributes them to different MDs. Then MDs exchange the chunks via an ad hoc manner until all MDs receive the complete content. Consequently, only a few copies of the content are delivered by cellular network instead of the entire content  \cite{Al2014Optimal}. Another direction is group-based multicast in multihop D2D networks where content is relayed group by group using multicast. However, this scenario is much less investigated in the literature.

	The idea of multicast in multihop networks has many advantages. First, it can considerably offload traffic of BS. Most importantly, the total power consumption can be significantly reduced  since the distances between MDs are much shorter than the distances between BS and MDs. Moreover, edge MDs far from BS can gain much better transmission from their close multicast groups. However, this idea depends on how to group MDs and form the multicast links. This is very challenging as MD grouping usually falls into NP-complete problem even in a single-hop case.

	Several  works have been done on multicast D2D networks. For example, the authors in \cite{Militano2014Wi} compared Wi-Fi cooperation and D2D-based multicast content distribution in terms of time-saving and power-saving. However, how to group MDs was not discussed. In \cite{Seppala2011Network}, grouping multicast was considered for delay and throughput  problems. In the work, the groups were assumed to be fixed. In \cite{Al2014Optimal}, fairness constraint was imposed to user grouping for channel allocation.

	To our knowledge, there is no work on total power minimization with group multicast in cooperative multihop D2D networks. Our aim is to find an efficient power minimization strategy to group MDs by multicast transmission while maintaining rate requirements of all MDs. The multicast groups are connected via multihop relaying fashion. The problem has no optimal solution with polynomial time complexity. Thus we alternatively propose two heuristic algorithms to balance the performance and complexity. Each algorithm has its own
\begin{figure}[t]
\centering
\includegraphics[scale=.3]{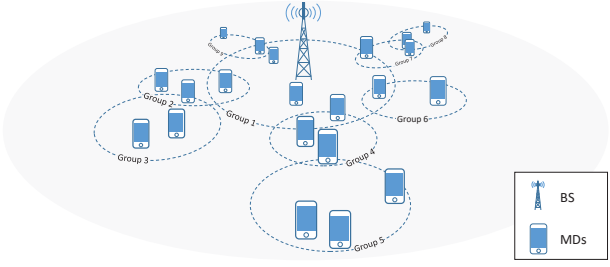}
 \caption{Content delivery in multicast multihop D2D network.}\label{fig:ao1}
\end{figure}	
\noindent advantages and disadvantages.  The simulation results verify the effectiveness of the proposed algorithms.

	The rest of the paper is organized as follows.  Section II describes the system model, including main parameters and problem formulation. Section III proposes two efficient algorithms. Simulation results and analysis are shown in Section IV. Finally, conclusions are made in Section V.

\section{System Model and Problem Formulation}

\begin{table}[!tp]
\caption{MAIN PARAMETERS AND VARIABLES}
\begin{tabular}{|p{1.5cm}|p{6.4cm}|}
\hline
\multicolumn{2}{|c|}{\textbf{Parameters}}  	\\\hline
$\mathcal {C}$	&The set of MDs $\mathcal{C}=\{1,2,...,C\}$ \\\hline

$P_{t,mn} $ & Transmit power of transmitter $m$ to $n$ \\\hline
$P_{r,mn} $ & Receive power at MD $n$ from transmitter $m$\\\hline
$\beta$ & The path loss exponent \\\hline
$N_0$	&The noise power density \\\hline
$\mathcal{K}_{s_j}$ &The multicast group consists of MDs receiving content from transmitter $s_j$ \\\hline
$R_g(\mathcal{K}_{s_j})$	&The multicast rate of group $\mathcal{K}_{s_j}$ \\\hline
$d_0$&A reference distance of the antenna far-field\\\hline

$h_{mn}$   &The channel quality exponent between MD $m$ and MD $n$\\\hline
$x_{mn}$	&A binary variable that indicates whether the condition of channel between MD $m$  and MD $n$ is the worst in the multicast group transmited by MD $m$  \\\hline

$y_{mn}^h$	&A binary variable indicates whether MD $m$ transmits content to MD $n$ on hop $h$ \\\hline
$R_{min}$	&The required rate of all MDs\\\hline
$H_{max}$	&The maximum of the number of transmit hops  \\\hline
$\mathcal{R}$	&The set of receivers have not yet get content  \\\hline
$\mathcal{S}$	&The set of potential transmitters containing content  \\\hline
$\mathcal{K}$	&The set of disjoint multicast among multiple hops \\\hline

\end{tabular}
\end {table}

We consider a cellular network where the MDs tend to obtain the same content from the single BS, denoted as set $\mathcal{C}=\{1,...,C\}$. As discussed previously, no matter how many services MDs request and how good the wireless environment they are in, BS should always be ready for transmission. Therefore, the load of BS is considerably heavy, especially when MDs are active.

In addition, a MD will consume many resources if it requests for a large content. This kind of data  demands is popular such as file downloading and sharing. Consequently, maintaining sufficient energy storage becomes a great challenge to any cellular system.
To deal with this problem, we consider the scenario that MDs offload the traffic of BS using D2D communication. That is, each MD can receive content either from BS or other MDs through multiple hops as shown in Fig. 1.

The D2D-enabled content distribution process is carried out as follows: Firstly, when new content is available in BS and can be downloaded, the BS chooses a subset of MDs and multicasts the content to them. The required service rates of the selected MDs in the subset should be satisfied. Then, the rest of MDs can obtain the same content from those MDs receiving content from the BS. These MDs form disjoint subsets according to certain grouping rules and each subset chooses its appropriate multicast transmitter. This procedure is repeated until all MDs are grouped and their required rates are maintained. In another word, the content delivery process is carried out group by group using multihop relaying.

\subsection{Parameters and Variables}
We present the main parameters and variables in Table I.

In this paper, we consider the wireless channel gain from MD $m$ to MD $n$ as the following model:
     \begin{align}
     \begin{split}
  h_{mn}(\rm dB)&=\frac{P_{r,mn}(\rm dB)}{ P_{t,mn}{(\rm dB)}} \\&=\underbrace{10\log_{10}K-10\beta\log_{10}\frac{d}{d_0}}_{\rm path loss}-\underbrace{\varphi _{\rm dB}}_{\rm shadow fading}\label{c1}
  \end{split}
    \end{align}
where $P_{t, mn}$ and $P_{r, mn}$ are the transmit power and received power between transmitter $m$ and receiver $n$, respectively; $K$ is a constant which depends on the characteristics of antenna and the attenuation of average channels; $d_0$ is a reference distance (1-10 meters indoors and 10-100 meters outdoors) of the antenna far-field; $\beta$ is the path loss exponent; $\varphi_{\rm dB}$ is Gauss-distributed random variable with mean zero and variance $\sigma_{\varphi_{\rm dB}}^2$; $d$ is the distance between a transmitter and a receiver.



Given the transmitter $m$, the achievable rate of MD $n$ is given by
     \begin{align}
	R_n = \log_2 \left(1+\frac{h_{mn} P_{t,mn}}{N_0}\right),  \label{ac}
    \end{align}
where $N_0$ is the noise power density. For the multicast case, the bit rate $R_n$ should follow the Short Slab theory, which means that the rate is limited by the worst channel. Assume that there is a multicast group $\mathcal{K}{_{s_j}}$ where $s_j$ acts as the transmit node, it uses multicast to serve all MDs. The maximum muticast rate $R_g(\mathcal{K}{_{s_j}})$ is given by

      \begin{align}
      \begin{split}
R_g(\mathcal{K}{_{s_j}}) & =\log_2\left(1+\frac{\sum\limits_{\forall m,n\in\mathcal{K}{_{s_j}}}  h_{mn} P_{t,mn}x_{mn}}{N_0}\right)\\
&= \log_2\left(1+\frac{h_{mw'}P_{t,mw'}}{N_0}\right), \label{ab}
	\end{split}
      \end{align}
where $w'$ is MD with the worst channel quality in multicast group $\mathcal{K}{_{s_j}}$, which means  $R_g(\mathcal{K}{_{s_j}})$ = $\min R_n$, $n\in\mathcal{K}{_{s_j}}$. The required transmit power depends on the decodable rate. That is, for given quality of service demand $R_{min}$. The transmit power is computed by $P_{t,mn}=(2^{R_{min}}-1)N_0/h_{mn}$.



\subsection{ Optimization Problem Formulation }
Our goal is to minimize the total transmit power consumption of the whole network by optimizing multicast group division while maintaining the minimum achievable rate of each MD. The problem can be mathematically formulated as
\renewcommand{\minalignsep}{-80pt}
\begin{flalign}
&\min_{P_t,x,y}  \sum_{m=1}^C \sum_{n=1}^C P_{t,mn} x_{mn} +P_{(BS)} \label{eqn:p1}\\
{\rm s.t.}~~~
& x_{mn} \leq y_{mn}^h, &&\forall m,\forall n\in \mathcal{C} ,2 \leq h \label{eqn:p2}\\
&y_{mn}^h \leq y_{(BS)m}^1  ,&&\forall m, \forall n \in \mathcal{C}, 2 \leq h \label{eqn:p3}\\
&y_{ji}^h \leq \sum_{k=1}^C  y_{kj}^{h'}  ,&&h'=1,..,h-1 \label{eqn:p4}\\
&R_{min}  \leq R_g(\mathcal{K}{_{s_j}}),   &&\forall \mathcal{K}{_{s_j}} \subseteq \mathcal{K}\label{eqn:p5}\\
&\sum_{h=2}^{H_{max}} \sum_{m\neq n}^C y_{mn}^h + y_{(BS)n}^1 =1,&&\forall n \label{eqn:p6}\\
&  h \leq H_{max},		&&\forall h \label{eqn:p7}
\end{flalign}
where $x_{mn}$ is a binary variable that indicates whether the channel condition between MD $m$  and MD $n$ is the worst in the multicast group transmitted by MD $m$; $y_{mn}^h$ is a binary variable indicates whether MD $m$ transmits content to MD $n$ on hop $h$; $R_{min}$ is the minimum rate requirement of all MDs to decode the same content; $R_g(\mathcal{K}{_{s_j}})$ is the multicast rate in the multicast group $\mathcal{K}{_{s_j}}$; $H_{max}$ is the predefined maximum tolerated hops to some extent delay considerations. The subscript BS represents the base station. 

The objective \eqref{eqn:p1} minimizes total transmit power of MDs and BS.
Constraint \eqref{eqn:p2} ensures that the transmission rate determined by the worst channel condition in a multicast group. Constraint \eqref{eqn:p3} ensures that MD $n$ can transmit on next hop only if it receives content from BS.
Constraint \eqref{eqn:p4} ensures that MD $n$ can transmit content on next hop only if it receives the content on previous hop.
Constraint \eqref{eqn:p5}  ensures that each multicast group $\mathcal{K}{_{s_j}}$ should meet the minimum rate requirement $R_{min}$ to ensure quality of service. Constraint \eqref{eqn:p6} ensures that each MD can receive the content once among the total $H_{max}$ hops transmission.
Constraint \eqref{eqn:p7} is the maximum hop tolerance, which is related to the delay problem in practice.

\subsection{Complexity}
The optimization formulization that minimizes the total power consumption by optimizing multicast group division while maintaining multihop delay and data rate is an mixed integer programming (MIP) problem. MIP problem is always NP-complete due to the binary variables \cite{junger200950}. Let $\alpha$ represents the number of binary variables and  $\alpha=C^2 +H_{max}C^2$ in problem \eqref{eqn:p1}-\eqref{eqn:p7}. The worst case complexity of determining the optimal result of this MIP problem is $\mathcal{O}(2^\alpha)$.
The computational complexity of finding the optimal solution will increase exponentially as the number of MDs and transmit hops increases. Therefore, we turn to propose suboptimal methods with lower complexity in next section, which are suitable for practical communication systems.

\section{Proposed Solution - Algortithms} \label{ss2}
	
	In this section, we propose two heuristic algorithms. A core of a heuristic algorithm is to design a certain rule of choosing which MDs to connect with multicast in a multihop network. The two proposed heuristic algorithms are based on the different greedy rules and detailed in each subsection respectively.

\subsection{{Channel Gain Oriented Algorithm}}
 \label{bg}


In this algorithm, we assume that $H_{max}=C$, which is applicable for the case where the network has low-density MDs. We denote the initial set of transmitters $\mathcal{S} = \{s_1, s_2,..., s_j\}$ and the set of receivers $\mathcal{R}=\{r_1,r_2,..., r_i\}$, where $s_j$ and $r_i$ represent the $j$th transmitter and the $i$th receiver, respectively. Denote $\mathcal {K} =\{\mathcal{K}_{s_1},\mathcal{K}_{s_2},..., \mathcal {K}_{s_j}\}$ as final multicast groups and $s_j$ is the transmitter of the multicast group $\mathcal{K}_{s_j}$.
At the beginning of the algorithm, BS is the only element in $\mathcal{S}$, that is, the content delivery starts at BS. Meanwhile there are total $C$ elements as MDs in $\mathcal{R}$. The grouping of MDs is realized by following procedure: when a new link is to be established, there is only one MD with the largest $h_{{s_jr_i}}/N_0$ is chosen from the $\mathcal{R}$ as the receiver for $s_j$. The process continues until all MDs are linked. Specifically, assume that the $i$th MD in $\mathcal{R}$ and its best channel condition is the link with the $j$th MD in $\mathcal{S}$. Then let the $j$th MD be the transmitter for the $i$th MD and the $i$th MD shifts from $\mathcal{R}$ into $\mathcal{S}$ and becomes a potential transmitter for next hops. Finally the MDs with a common transmitter are divided into the same multicast group.

 Fig. \ref{fig:ii2}(a) provides an example of the grouping process. In this example, there are 3 MDs  (i.e., $s_1$, $s_2$, and $s_3$) in $\mathcal{S}$ at the beginning. A MD is selected from $\mathcal{R}$ if it has the largest channel gain with the already linked MDs. As shown in \mbox{Fig. \ref{fig:ii2}(a),} links $l_1$, ..., $l_4$ are successively established. After the whole D2D network is established as in Fig. \ref{fig:ii2}(a), the MDs with the same transmitter are grouped together for multicast as shown in Fig. \ref{fig:ii2}(b).

\begin{algorithm}[H]
\caption{Greedy Channel Gain Oriented Solution \label{eqn:pb1}}
\begin{algorithmic}[1]
\STATE Initialize $\mathcal{S}=\{BS\}$, $\mathcal{R}=\{\mathcal{C}\}$, $\mathcal{K}=\emptyset$.
\WHILE{$\mathcal{R}\neq\emptyset$}
\STATE Select $s_j\in\mathcal{S}$ and $r_i\in\mathcal{R}$ that have the maximum $h_{s_jr_i}$/$N_{0}$
\STATE Let MD $s_j$ be the transmitter for MD $r_i$
\STATE \qquad $\mathcal{K}_{s_j}\leftarrow\mathcal{K}_{s_j}\cup r_i$
\STATE Update the sets of transmitters and receivers as
\STATE \qquad $\mathcal{S}\leftarrow\mathcal{S}\cup r_{i}$
\STATE \qquad $\mathcal{R}\leftarrow\mathcal{R}\backslash r_{i}$
\ENDWHILE
\STATE Calculate power consumption of each multicast group as \eqref{ab}.
\STATE Calculate total power consumption of all groups.
\end{algorithmic}
\end{algorithm}

  \begin{figure}[H]
\centering
\subfloat[]{\includegraphics[width=4.5cm,height=4.5cm]{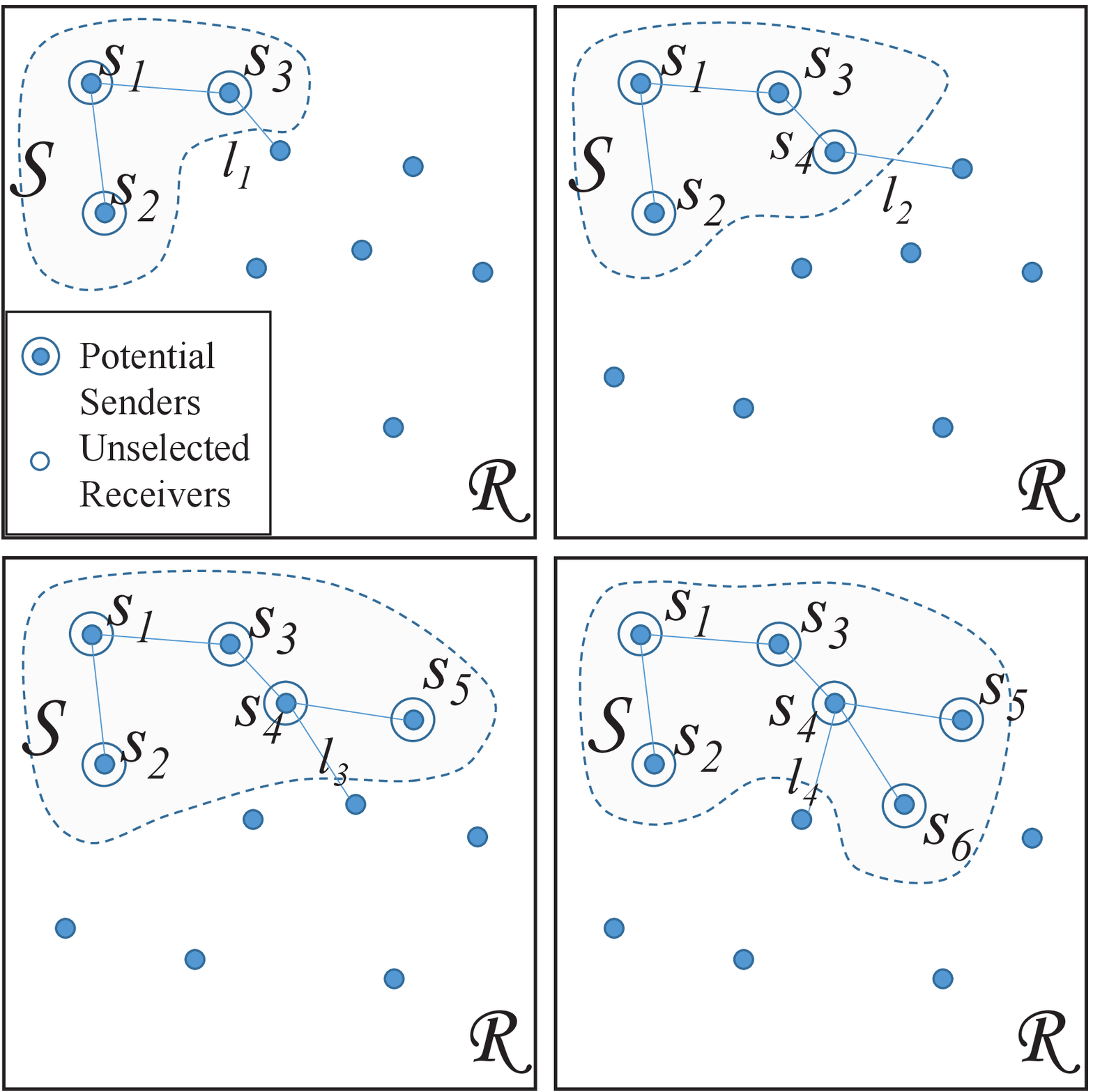}}
\subfloat[]{\includegraphics[width=4.5cm,height=4.5cm]{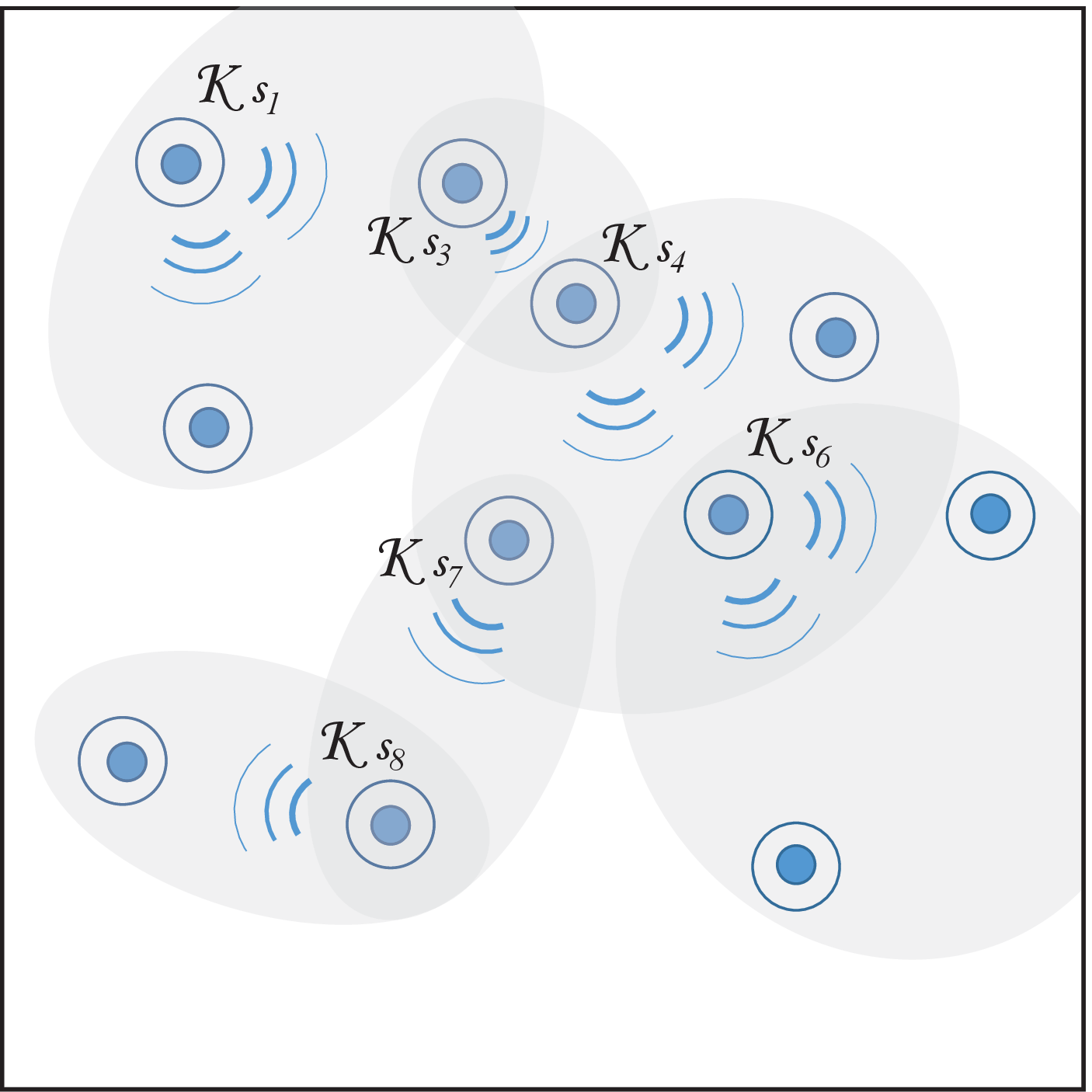}}
\caption{(a) The grouping process of Algorithm 1. $s_1, s_2,..., s_n$ are MDs in $\mathcal{S}$ acting as potential transmitters. $l_1, l_2,..., l_n$ represent the links with the maximum channel gain. (b) The multicast multihop D2D cooperative network is formed. 
}\label{fig:ii2}
 \end{figure}

 The total power consumption is the sum power consumption of all multicast groups that can be calculated by \eqref{ab}. Finally, we present the whole solution in Algorithm 1. In this algorithm, the transmitter set $\mathcal{S}$ contains at most $(C+1)$ elements and the receiver set $\mathcal{R}$ with $C$ elements. The computationally complexity of line 3 in Algorithm 1 is $\mathcal{O}(C^2)$ for finding the maximum values of channel gains. This step is repeated $C$ times and thus, the complexity of Algorithm 1 is $\mathcal{O}(C^3)$.


Note that, the channel gain oriented algorithm may trigger the delay problem if the number of MDs goes to large.
In addition, the complexity of this algorithm is $\mathcal{O}(C^3)$, which may be a little bit high for scenarios involving a large number of MDs. In the following subsection, we propose a  cluster oriented alogoithm with lower complexity.


\subsection{{Cluster Oriented Algorithm}} \label{be}
Given the needed minimum rate $R_{min}$, the proposed cluster oriented algorithm aims to decrease total power consumption by minimizing the number of multicast groups, which is in part to minimize the number of transmitters. However this problem is NP-complete even in a single hop.

\begin{theorem}The problem to determine the minimum number of transmitters on a single hop is NP-complete.
\end{theorem}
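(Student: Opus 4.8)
The plan is to prove the two halves of NP-completeness separately: membership in NP (easy), and NP-hardness via a polynomial-time reduction from a known NP-complete problem. For membership, a certificate is a candidate set $\mathcal{S}$ of transmitters together with an assignment of every MD to one group; given it we verify in polynomial time that (i) $|\mathcal{S}|$ does not exceed the target $k$, (ii) every MD lies in exactly one group (constraint \eqref{eqn:p6} restricted to one hop), and (iii) for each group the worst-channel multicast rate \eqref{ab} is at least $R_{min}$, i.e. the chosen transmitter can indeed serve every receiver assigned to it. Each check is polynomial, so the decision version --- ``can all MDs be served on a single hop using at most $k$ transmitters?'' --- is in NP.

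For hardness I would reduce from \textsc{Minimum Set Cover} (equivalently \textsc{Dominating Set}), whose decision form is NP-complete. From a ground set $U=\{e_1,\dots,e_p\}$ and a family $\{S_1,\dots,S_q\}$ of subsets, build a single-hop instance with one MD $n_i$ per element $e_i$ and one candidate transmitter $t_j$ per subset $S_j$. The key modelling step is to choose the channel gains --- through the distances $d$ in \eqref{c1}, or directly, since the shadow-fading term leaves $h_{mn}$ an essentially free positive parameter --- together with a per-device power budget $P_{\max}$ so that $t_j$ can deliver rate $R_{min}$ to $n_i$ \emph{exactly} when $e_i\in S_j$: set $h_{t_j n_i}$ just above the threshold $(2^{R_{min}}-1)N_0/P_{\max}$ when $e_i\in S_j$ and far below it otherwise, and make each $t_j$ able to serve itself. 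Then a set of transmitters covers all MDs on one hop iff the corresponding subfamily covers $U$, so the minimum number of transmitters equals the minimum set-cover size, and any polynomial algorithm for the transmitter problem would solve \textsc{Set Cover}.

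The step I expect to be the crux is the \emph{realizability} of an arbitrary coverage pattern: the path-loss law ties $h_{mn}$ to Euclidean distance, so not every bipartite coverage relation is geometrically consistent. Two remedies: (a) retain the explicit budget $P_{\max}$ and use the fact that, with shadow fading absorbed into $h_{mn}$ (as \eqref{ab} already assumes), the gains may be chosen independently of any geometric layout; or (b) reduce instead from \textsc{Dominating Set on unit-disk graphs}, which is NP-complete and whose instances are by construction points in the plane with a fixed communication radius $r$, matching the path-loss model directly --- there ``$t_j$ covers $n_i$'' reads ``$\|t_j-n_i\|\le r$'' and a size-$k$ dominating set is exactly a feasible $k$-transmitter configuration. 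The remaining bookkeeping --- letting each transmitter cover itself, and discharging the worst-channel multicast constraint \eqref{ab} by identifying each MD with its own position --- is routine, and together with NP membership it establishes NP-completeness.
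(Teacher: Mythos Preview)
Your proposal is correct and shares the paper's core idea --- identifying the single-hop transmitter-selection problem with \textsc{Set Cover} --- but you execute it considerably more carefully than the paper does. The paper's proof is essentially a two-line sketch: it observes that each potential transmitter has a set of receivers it can serve, gives one numerical example, and asserts that the resulting problem ``is similar to'' \textsc{Set Cover}; it neither separates NP membership from NP-hardness nor checks that the reduction runs in the hardness direction. Your version does both, and in addition confronts a point the paper glosses over entirely: whether an \emph{arbitrary} coverage relation is realizable under the path-loss model \eqref{c1}. Your two remedies --- exploiting the free shadow-fading term so that the gains $h_{mn}$ decouple from geometry, or reducing instead from \textsc{Dominating Set} on unit-disk graphs so that the instance is geometric by construction --- are exactly the right ways to close that gap, and either makes the argument rigorous where the paper's is only suggestive.
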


\begin{proof}
 Assume each transmitter knows itself potential receivers. For example, $\mathcal{C}$= \{\{\textbf{1},3,4\},\{\textbf{2},3,5\},\{\textbf{2},4\},\{\textbf{1},4\}\}, the MDs shown in boldface mean that they act as transmitters in subsets. Finally we choose \{\textbf{1},4\} and \{\textbf{2},3,5\} as the subsets which have the minimum number of subsets meanwhile cover all elements. This problem is similar to the set cover problem which is defined as follow: Given a set $\mathcal{A}$ and disjoint subsets $\mathcal{A}_{s_1}, \mathcal{A}_{s_2}, ... \mathcal{A}_{s_j}$ $\subseteq \mathcal{A}$, the goal is to select a minimum number of these disjoint subsets which contain all elements in $\mathcal{A}$. The set cover problem is NP-complete \cite{kozen2012design}. 
\end{proof}


The commonly used method for the set cover problem is greedy algorithm since it cannot be solved optimally in polynomial time. Here, we adopt the clustering idea into our problem, which follows the rule of selecting receivers as many as possible in a multicast group if the multicast channel gain is greater than a predefined threshold $h_{(set)}/N_0$ to satisfy the minimum rate $R_{min}$. This is because minimizing the number
\begin{figure}[H]
\centering
\subfloat[]{\includegraphics[width=4.5cm,height=4.5cm]{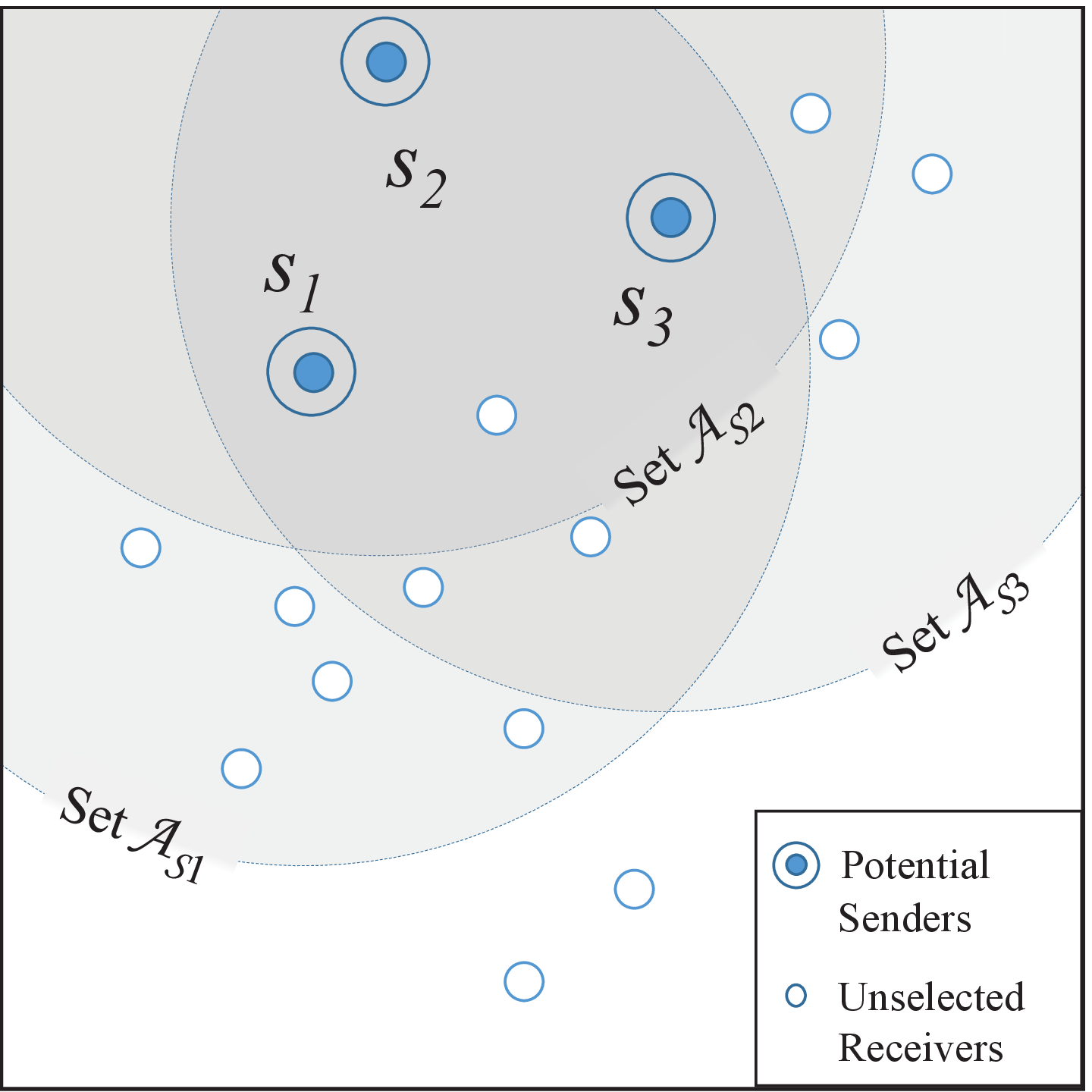}}
\subfloat[]{\includegraphics[width=4.5cm,height=4.5cm]{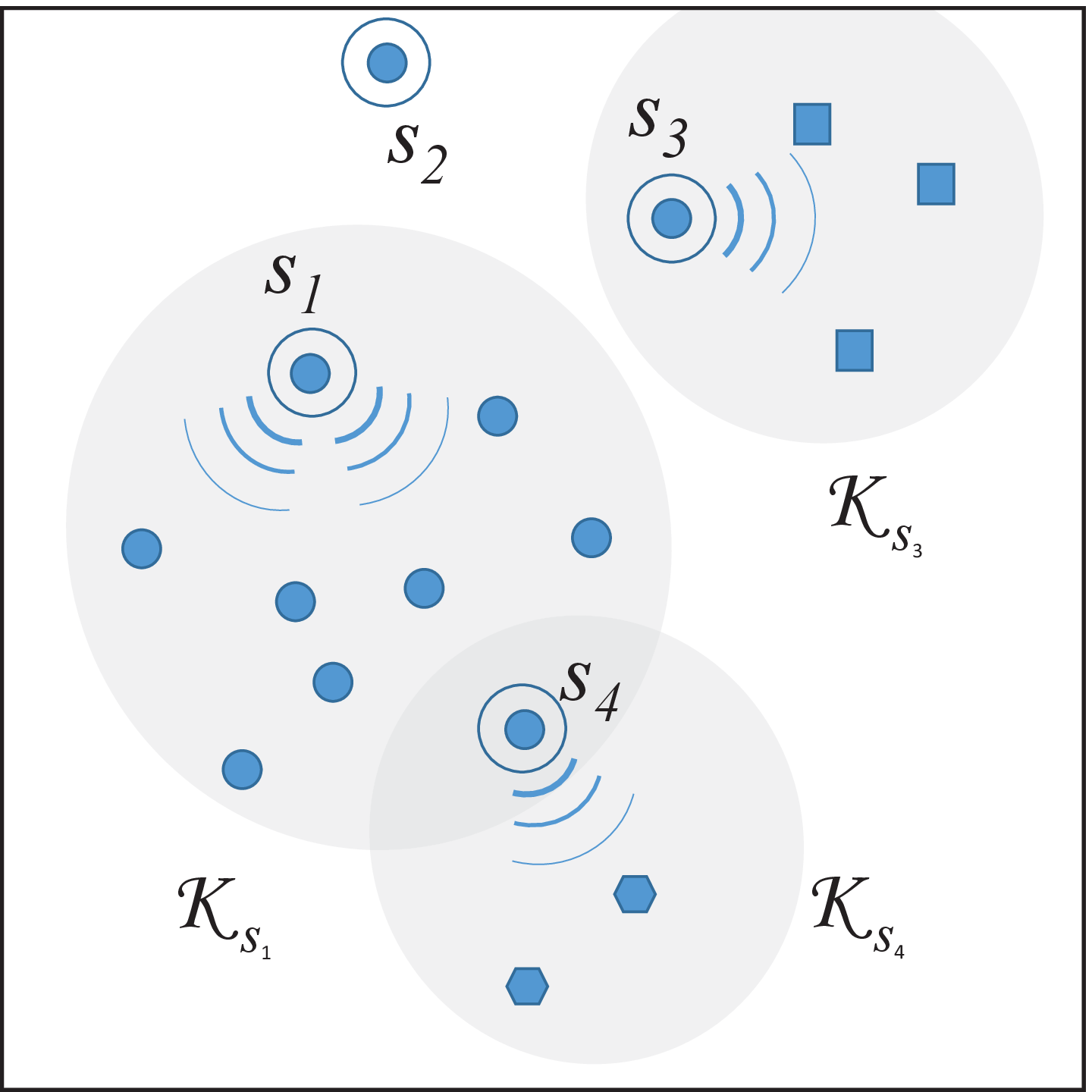}}
\caption{An example for Algorithm 2.1. (a) Every MD that has already been linked starts to find out all unlinked MDs which can meet the rate constraint and form its own set. (b) According to the set cover rule, a transmitter with the largest set of receivers is priori to form group.
}
 \label{fig:se}
 \end{figure}
\noindent of transmitters also leads to minimizing total power consumption in some sense. We define that $\mathcal{A} = \{\mathcal{A}_{s_1}, \mathcal{A}_{s_2}, ... \mathcal{A}_{s_j}\}$, where the subset $\mathcal{A}_{s_j}$ contains element $r_i$ in $\mathcal{R}$ (i.e., all uncovered MDs) if $h_{s_jr_i}/N_0 \geq h_{(set)}/N_0$. Note that it is possible that a MD may be selected by more than a transmitter (i.e., this MD is included by more than a subset of $\mathcal{A}$), so $\mathcal{A}$ is not the final grouping result.


We present the greedy set cover method in Algorithm 2.1 to solve the grouping problem on a specific hop, where $\mathcal{R}$ and $\mathcal{A}$ are inputs and return $\mathcal{K}$ as the final grouping result. At the beginning of this algorithm, $\mathcal{I}$ heritages all elements from $\mathcal{R}$, a set of all the uncovered receivers. Consequently, the intersection of $\mathcal{I}$ and $\mathcal{A}$ equals to the set of uncovered MDs which maintains the minimum rate $R_{min}$ (or equivalently the channel gain is greater than $h_{(set)}/N_0$). If there exists uncovered elements, the already covered MDs will cover the reminders on next hop. On each hop, a largest set of uncovered MDs satisfying the minimum ratet $R_{min}$ is selected to be a new multicast group $\mathcal{K}_{s_j}$ with the transmitter $s_j$. Meanwhile, $\mathcal{I}$ removes those MDs from $\mathcal{A}_{s_j}$ since they are covered. It is worth noting that after each iteration, all possible sets of \mbox{$\mathcal{I} \cap \mathcal{A}_{s_j}$} (i.e., uncovered MDs satisfying the minimum rate $R_{min}$) may be changed because some of their elements may already be removed from $\mathcal{I}$. Thus, by judging the size of $\mathcal{I} \cap \mathcal{A}_{s_j}$ instead of $\mathcal{A}_{s_j}$, the algorithm can correctly select the largest set from the reminders.

 \setcounter{algorithm}{0}
\renewcommand{\thealgorithm}{\arabic{subsection}.\arabic{algorithm}}

\begin{algorithm}[H]
\caption {Greedy Set Cover Method} 
\begin{algorithmic}[] \label{bb}
\STATE
\textbf{Step 1}. $\mathcal{I}\leftarrow\mathcal{R}$.
\STATE
\textbf{Step 2}. \textbf{while}  $\mathcal{I}\cap\mathcal{A}\ne\emptyset $
\STATE
\qquad \qquad   a: select $\mathcal{A}_{s_j}\subseteq\mathcal{A}$ that maximizes \{$\mathcal{I}\cap\mathcal{A}_{s_j}$\}
\STATE
\qquad \qquad\quad  $\mathcal{K}_{s_j}\leftarrow\mathcal{I}\cap\mathcal{A}_{s_j}$
\STATE
\qquad \qquad  b: $\mathcal{K}\leftarrow\mathcal{K}\cup\mathcal{K}_{s_j}$
\STATE
\qquad \qquad c:  $\mathcal{I}\leftarrow\mathcal{I}\backslash\mathcal{A}_{s_j}$
\STATE
\qquad\quad\, \textbf{end while}
\STATE
\textbf{Step 3}. \textbf{return} $\mathcal{K}$
\end{algorithmic}
\end{algorithm}

We take an example in Fig. \ref{fig:se} to illustrate the cluster oriented algorithm on a specific hop. Assume that transmitter set $\mathcal{S}$ contains three elements, i.e., $\mathcal{S}$ =\{$s_1$, $s_2$, $s_3$\}.  Subset $\mathcal{A}_{s_1}$, containing the most MDs which maintain the minimum rate $R_{min}$
with transmitter $s_1$, is formed and shown in Fig. \ref{fig:se}(a).

Thus, the multicast group $\mathcal{K}_{s_1}$ is formed after the first iteration. Then by removing $\mathcal {K}_{s_1}$ from the set, $\mathcal{K}_{s_3}$ is formed and thus the final grouping result is shown in Fig. \ref{fig:se}(b).

In Algorithm \ref{bc}, we present the whole algorithm to determine the grouping result in multihop.  Note that there may exist some MDs which are uncovered as they cannot meet the request of $h_{(set)}/N_0$  within $H_{max}$ hops. When such situation occurs, the algorithm reduces the channel threshold $h_{(set)}/N_0$ so that more MDs can be involved on each hop, and accordingly increases the transmit power to maintain the minimum rate constraint. The determining of $h_{(set)}/N_0$ is based on experiential simulations and we do not discuss the details in the paper.

 \begin{algorithm}[H]
\caption {{Greedy Clustering Solution}}
\begin{algorithmic}[1]\label{bc}
\STATE
Initialize $\mathcal{S}=\{BS\}$, $\mathcal{R}=\{\mathcal{C}\}$, $\mathcal{K}=\emptyset$, $h=1$.
\FOR {$h=1$ to $H_{max}$}
\STATE
{a}. Update $\mathcal{A}$ by adjusting transmit power and the predefined multicast channel threshold $h_{(set)}/N_0$ to satisfy the rate constraint.
\STATE
{b}. Run Algorithim \ref{bb} and obtain the grouping result $\mathcal{K}$ on current hop.
\STATE
{c}. Update the sets of transmitters and receivers as.\\
 \qquad \qquad\qquad   $\mathcal{S}\leftarrow\mathcal{S}\cup\mathcal{K}$\\
 \qquad \qquad\qquad   $\mathcal{R}\leftarrow\mathcal{R}\backslash\mathcal{K}$\\
{d}. Increase the number of hops $h = h + 1$
\ENDFOR
 \STATE Calculate power consumption of each multicast group as \eqref{ab}.
\STATE Calculate total power consumption of all groups.

\end{algorithmic}
\end{algorithm}

 In Algorithm 2.2, the transmitter set $\mathcal{S}$ contains at most \mbox{($C+1$)} elements and the receiver set $\mathcal{R}$ with $C$ elements. Therefore, the complexity of forming set $\mathcal{A}$ is $\mathcal{O}(C^2)$. Moreover,  line 4 needs complexity of $\mathcal{O}(C^2)$ for finding subset with maximum number of elements. Since line 3 and line 4 are repeated at most $H_{max}$ times and thus, the total complexity of the cluster oriented algorithm is $\mathcal{O}(H_{max}C^2)$.

\section{Simulation Results}
	In this section, we evaluate the performance of the proposed algorithms. We set up the stimulation parameters as follows. MDs are randomly distributed within a circle area with a radius of 500 meters, where BS locates at the center point. MDs require the same content from BS. The noise $N_0$ is considered to be -100dBm. The channel parameters are: the constant $K$ is -31.54dB, the path loss exponent is $\beta=3$, $d_0=1$m, and $\phi_{\rm dB}$  is a zero-mean Gaussian random variable which represents the effects of shadow fading \cite{goldsmith2005wireless}. The required rate \mbox{$R_{min}=10$ (bit/s/Hz).} $H_{max}$ is considered to be $C$ hops in the channel gain oriented algorithm and 10 hops in the cluster oriented algorithm. Each performance is simulated by Monte Carlo method with 10000 times.

\begin{figure}[t]
\centering
\includegraphics[scale=.6]{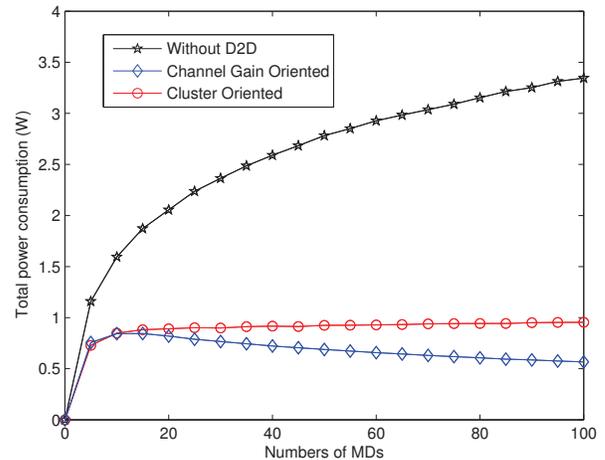}
 \caption{The total transmit power versus the number of MDs.}\label{fig:f1}
\end{figure}

\begin{figure}[t]
\centering
\includegraphics[scale=.6]{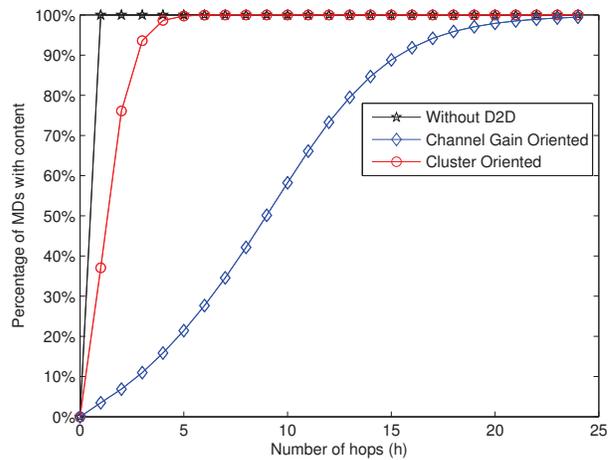}
 \caption{The percentage of MDs with the content versus the number of hops when there are 100 MDs in total.}\label{fig:2}
\end{figure}


Fig. \ref{fig:f1} shows the total transmit power consumption by the channel gain oriented and the cluster oriented algorithm. The performance of BS broadcasting is also considered as a benchmark in the figure. Obviously, the two proposed algorithms have significant performance gain compared to the traditional scheme without multihop D2D cooperation. The reason is that MDs which are close to each other can guarantee high rate and consume much less power since they have shorter distances compared to that from BS. For the channel gain oriented algorithm, we observe that the power consumption goes down as the number of MDs increases if the number of MDs is greater than 10. This mainly attributes to that the increase of density lowers D2D distances generally and thus reduces the transmit power. For the cluster oriented algorithm, the variation of power consumption is much smaller than the channel gain oriented algorithm. The channel gain oriented algorithm performs the best in total power consumption but is at the price of its uncontrolled hops. Therefore, the channel
\begin{table}[t]
\centering
\caption{\\
TOTAL POWER (W) VERSUS NUMBER OF MDS C}
\begin{tabular}{cccccccc}\hline  \label{tab:compare}
C&1&2&3&4&5&6&7\\  \hline
Algorithm 1&0.394&0.525&0.616&0.691&0.742&0.774&0.795 \\
Algorithm 2&0.394&0.528&0.624&0.695&0.734&0.766&0.799 \\
Optimal &0.394&0.509&0.573&0.638&0.654&0.709&0.739 \\ \hline
\end{tabular}
\end{table}
\noindent gain oriented algorithm is better in performance of power-saving if the content is delay-tolerant.

Fig. \ref{fig:2} shows that the channel gain oriented algorithm needs more hops than the cluster oriented algorithm. If there are 100 MDs, the channel gain oriented algorithm shows that up to 30 hops are required to make all MDs meet the rate constraint, while the cluster oriented algorithm just needs about 5 hops. This also means that remote MDs receive the content after waiting for a relatively longer time until previous MDs have obtained the content. In general, D2D networks formed by the cluster oriented algorithms have much less needed hops than that by the channel gain oriented algorithm and when the number of MDs increases, the difference continuously goes larger. As the number of hops arises, the stability of the network is undeniably affected by the multihop connections since the delay may become a problem. Such that, we can conclude that the cluster oriented algorithm is more favorable for cooperative multihop multicast in large networks if delay is an important issue.

We also compare the two proposed algorithms with the optimal situation obtained by exhaustive search  in Table \ref{tab:compare}. We can see that the performance of the two algorithms are close to each other and the difference between them is at most 0.01 W. When the number of MDs is small, the performance of the two algorithms are close to the optimal one.

\section{Conclusion}

This work addressed the energy efficiency of grouping solution in multicast multihop D2D cooperative network in a content distribution scenario. BS first multicasts content to a certain group of MD which act as relays and deliver it to other MDs in the next hops. A comprehensive optimization framework was presented for analyzing how to group MDs in order to reach a lowest total power consumption with the minimum rate constraint. This optimization problem is NP-complete. Therefore, two polynomial-time greedy algorithms were proposed to efficiently solve the problem. Simulation results showed that the total power consumption of the D2D cooperative multihop network is significantly less than that the traditional BS multicast network. In addition, the two proposed algorithms showed their individual advantages.


\appendices

\footnotesize
\bibliographystyle{IEEEtran}
\bibliography{OFDMA}

\end{document}